\documentclass[letterpaper, 10 pt, conference]{ieeeconf}

\IEEEoverridecommandlockouts                              

\usepackage[T1]{fontenc}
\usepackage[latin9]{inputenc}
\usepackage{amsmath}
\usepackage{amssymb}
\usepackage{graphicx}

\usepackage{amsthm}

\makeatletter
  \theoremstyle{plain}
  \newtheorem{lem}{\protect\lemmaname}
  \theoremstyle{definition}
  \newtheorem*{example*}{\protect\examplename}
  \theoremstyle{plain}
  \newtheorem{assumption}{\protect\assumptionname}
\theoremstyle{plain}
\newtheorem{thm}{\protect\theoremname}
  \theoremstyle{remark}
  \newtheorem{rem}{\protect\remarkname}

\usepackage{blkarray}
\usepackage{empheq}
\usepackage{hyperref}
\usepackage[caption=false]{subfig}
\usepackage{svg}
\usepackage{comment}

\graphicspath{{img/}}

\makeatother

\providecommand{\assumptionname}{Assumption}
\providecommand{\examplename}{Example}
\providecommand{\lemmaname}{Lemma}
\providecommand{\remarkname}{Remark}
\providecommand{\theoremname}{Theorem}

\begin{document}
\global\long\def\sign{\text{\text{sign}}}
\global\long\def\R{\mathbb{R}}
\global\long\def\N{\mathbb{Z}}
\global\long\def\tr{^{\top}}
\global\long\def\L{\mathcal{L}}
\global\long\def\K{\mathcal{K}}

\title{\LARGE \bf
Interval Prediction for Continuous-Time Systems with Parametric Uncertainties}

\author{Edouard Leurent, Denis Efimov, Tarek Ra\"issi, Wilfrid Perruquetti
	\thanks{Edouard Leurent is with Renault, France.}
	\thanks{Denis Efimov is with Inria, Univ. Lille, CNRS, UMR 9189 - CRIStAL, F-59000 Lille, France and with ITMO University, 197101 Saint Petersburg,
Russia.}
    \thanks{Tarek Ra\"issi is with Conservatoire National des Arts et M\'etiers (CNAM), Cedric 292, Rue St-Martin, 75141 Paris, France. }
    \thanks{Wilfrid Perruquetti is with Centrale Lille, CNRS, UMR 9189 - CRIStAL,
    	F-59000 Lille, France.}
}

\maketitle
\begin{abstract}
The problem of behaviour prediction for linear parameter-varying systems is considered in the interval framework. It is assumed that the system is subject to uncertain inputs and the vector of scheduling parameters is unmeasurable, but all uncertainties take values in a given admissible set. Then an interval predictor is designed and its stability is guaranteed applying Lyapunov function with a novel structure. The conditions of stability are formulated in the form of linear matrix inequalities. Efficiency of the theoretical results is demonstrated in the application to safe motion planning for autonomous vehicles.
\end{abstract}

\section{Introduction}

There are plenty of emerging application domains nowadays, where the decision algorithms have to operate in the conditions of a severe uncertainty. Therefore, the decision procedures need more information, then the estimation, identification and prediction algorithms come to the attention. In most of these applications, even the nominal simplified models are nonlinear, and in order to solve the problem of estimation and control in nonlinear and uncertain systems, a popular approach is based on the Linear Parameter-Varying (LPV) representation of their dynamics \cite{Shamma2012,Marcos_Balas04,Shamma_Cloutier93,Tan97}, since it allows to reduce the problem to the linear context at the price of augmented parametric variation.

In the presence of uncertainty (unknown parameters or/and external disturbances) the design of a conventional estimator or predictor, approaching the ideal value of the state, can be realized under restrictive assumptions only. However, an interval estimation/prediction remains frequently feasible: using input-output information an algorithm evaluates the set of admissible values (interval) for the state at each instant of time \cite{Efimov2016,Raiessi2018}. The interval length must be minimized via a parametric tuning of the system, and it is typically proportional to the size of the model uncertainty \cite{Chebotarev2015}.

There are many approaches to design interval/set-membership estimators and predictors \cite{Jaulin02,Kieffer_Walter04,Bernard_Gouze04,Moisan_Bernard_Gouze09}, and this paper focuses on the design based on the monotone systems theory \cite{Bernard_Gouze04,Moisan_Bernard_Gouze09,RVZ10,REZ11,Efimov_a2012}.
In such a way the main difficulty for synthesis consists in ensuring cooperativity of the interval error dynamics by a proper design of the algorithm. As it has been shown in \cite{MazencBernard11,REZ11,Combastel2012}, such a complexity of the design can be handled by applying an additional transformation of coordinates to map a stable system into a stable and monotone one, see also \cite{Efimov_a2013,Chebotarev2015}. 

The objective of this paper is to propose an interval predictor for LPV systems. The main difficulty to overcome is the predictor stability, which contrarily to an observer cannot be imposed by a proper design of the gains. An interval inclusion of the uncertain components can be restrictive and transform an initially stable system to an unstable one. To solve this problem, a generic predictor is proposed for an LPV system. To analyze stability of the predictor, which is modeled by a nonlinear Lipschitz dynamics, a novel non-conservative Lyapunov function is developed, whose features can be verified through solution of linear matrix inequalities (LMIs). Finally, the utility of the developed theory is demonstrated on the problem of robust path planning for a self-driving car by making comparison with earlier results from \cite{Leurent2018}.

\section{\label{sec:Preliminaries} Preliminaries}

We denote the real numbers  by $\mathbb{R}$, the integers by $\N$, $\R_{+}=\{\tau\in\R:\tau\ge0\}$,  $\N_{+}=\N\cap\R_{+}$ and the sequence of integers $1,...,k$ as $\overline{1,k}$. Euclidean norm for a vector $x\in\mathbb{R}^{n}$ will be denoted as $|x|$, and for a measurable and locally essentially bounded input $u:\mathbb{R}_{+}\to\mathbb{R}$ we denote its $L_{\infty}$ norm by $
||u||_{[t_{0},t_{1}]}=\text{ess}\sup_{t\in[t_{0},t_{1})}|u(t)|
$
. If $t_{1}=+\infty$ then we will simply write $||u||$. We will denote as $\mathcal{L}_{\infty}$ the set of all inputs $u$ with the property $||u||<\infty$. 

The symbols $I_{n}$, $E_{n\times m}$ and $E_{p}$ denote the identity matrix with dimension $n\times n$, and the matrices with all elements equal 1 with dimensions $n\times m$ and $p\times1$, respectively.

For a matrix $A\in\R^{n\times n}$ the vector of its eigenvalues is denoted as $\lambda(A)$, $||A||_{max}=\max_{i=\overline{1,n},j=\overline{1,n}}|A_{i,j}|$ (the elementwise maximum norm, it is not sub-multiplicative) and $||A||_{2}=\sqrt{\max_{i=\overline{1,n}}\lambda_{i}(A\tr A)}$ (the induced $L_{2}$ matrix norm), the relation $||A||_{max}\le||A||_{2}\le n||A||_{max}$ is satisfied between these norms.

\subsection{Interval arithmetic}

For two vectors $x_{1},x_{2}\in\mathbb{R}^{n}$ or matrices $A_{1},A_{2}\in\R^{n\times n}$, the relations $x_{1}\le x_{2}$ and $A_{1}\le A_{2}$ are understood elementwise. The relation $P\prec0$ ($P\succ0$) means that a symmetric matrix $P\in\R^{n\times n}$ is negative (positive) definite. Given a matrix $A\in\R^{m\times n}$, define $A^{+}=\max\{0,A\}$, $A^{-}=A^{+}-A$ (similarly for vectors) and denote the matrix of absolute values of all elements by $|A|=A^{+}+A^{-}$. 
\begin{lem}
\textup{\cite{EFRZS12}} \label{lem:interval} Let $x\in\mathbb{R}^{n}$ be a vector variable, $\underline{x}\le x\le\overline{x}$ for some $\underline{x},\overline{x}\in\mathbb{R}^{n}$. 

\textup{(1)} If $A\in\R^{m\times n}$ is a constant matrix, then
\begin{equation}
A^{+}\underline{x}-A^{-}\overline{x}\le Ax\le A^{+}\overline{x}-A^{-}\underline{x}.\label{eq:Interval1}
\end{equation}

\textup{(2)} If $A\in\R^{m\times n}$ is a matrix variable and \textup{$\underline{A}\le A\le\overline{A}$} for some $\underline{A},\overline{A}\in\R^{m\times n}$, then
\begin{gather}
\underline{A}^{+}\underline{x}^{+}-\overline{A}^{+}\underline{x}^{-}-\underline{A}^{-}\overline{x}^{+}+\overline{A}^{-}\overline{x}^{-}\leq Ax\label{eq:Interval2}\\
\leq\overline{A}^{+}\overline{x}^{+}-\underline{A}^{+}\overline{x}^{-}-\overline{A}^{-}\underline{x}^{+}+\underline{A}^{-}\underline{x}^{-}.\nonumber 
\end{gather}
\end{lem}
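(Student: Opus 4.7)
The plan for part (1) is to use the decomposition $A = A^+ - A^-$ with the nonnegative matrices $A^+,A^- \ge 0$. Writing $Ax = A^+ x - A^- x$, monotonicity of multiplication by a nonnegative matrix yields $A^+ \underline{x} \le A^+ x \le A^+ \overline{x}$ and $A^- \underline{x} \le A^- x \le A^- \overline{x}$. Negating the latter and adding gives $A^+ \underline{x} - A^- \overline{x} \le Ax \le A^+ \overline{x} - A^- \underline{x}$, as desired.

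For part (2), the idea is to decompose both the variable matrix and the variable vector into their positive and negative parts: $A = A^+ - A^-$ and $x = x^+ - x^-$, so that
\begin{equation*}
Ax = A^+ x^+ - A^+ x^- - A^- x^+ + A^- x^-,
\end{equation*}
a sum of four products of nonnegative quantities. The crucial preliminary step is to transfer the componentwise bounds $\underline{A} \le A \le \overline{A}$ and $\underline{x} \le x \le \overline{x}$ to bounds on their positive and negative parts. Since $\max\{0,\cdot\}$ is monotone and $\min\{0,\cdot\}$ is monotone with reversed sign, one obtains
\begin{equation*}
\underline{A}^{+} \le A^{+} \le \overline{A}^{+}, \qquad \overline{A}^{-} \le A^{-} \le \underline{A}^{-},
\end{equation*}
and the analogous pair for $x$. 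Here the order on $A^-$ (and $x^-$) is reversed, which is the point most likely to trip one up; it accounts for the seemingly swapped subscripts appearing in (\ref{eq:Interval2}).

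Once these elementwise bounds are in hand, each of the four nonnegative products is bounded above and below by replacing the factors with their extremes. For the upper bound of $Ax$ one retains the maximum of each additive term and the minimum of each subtractive term, yielding $\overline{A}^{+}\overline{x}^{+} - \underline{A}^{+}\overline{x}^{-} - \overline{A}^{-}\underline{x}^{+} + \underline{A}^{-}\underline{x}^{-}$. The lower bound is obtained symmetrically by the opposite choice. Summing these four componentwise inequalities delivers (\ref{eq:Interval2}). The main obstacle is purely bookkeeping: tracking which of $\underline{\cdot}$ or $\overline{\cdot}$, together with which of the $(\cdot)^+$ or $(\cdot)^-$ operator, produces the extremum of each of the four products while accounting for the order-reversal on the negative parts.
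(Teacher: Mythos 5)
Your proof is correct. The paper itself gives no proof of this lemma -- it is quoted from the cited reference [EFRZS12] -- so there is nothing internal to compare against; your argument (splitting $A=A^{+}-A^{-}$ and $x=x^{+}-x^{-}$, observing the order reversal $\overline{A}^{-}\le A^{-}\le\underline{A}^{-}$ and $\overline{x}^{-}\le x^{-}\le\underline{x}^{-}$, and bounding each of the four nonnegative products termwise) is the standard derivation of this result and correctly reproduces every sign and subscript in \eqref{eq:Interval1} and \eqref{eq:Interval2}.
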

Furthermore, if $-\overline{A}=\underline{A}\le0\le\overline{A}$, then the inequality (\ref{eq:Interval2}) can be simplified: $-\overline{A}(\overline{x}^{+}+\underline{x}^{-})\leq Ax\leq\overline{A}(\overline{x}^{+}+\underline{x}^{-})$.

\subsection{Nonnegative systems}

A matrix $A\in\R^{n\times n}$ is called Hurwitz if all its eigenvalues have negative real parts, it is called Metzler if all its elements outside the main diagonal are nonnegative. Any solution of the linear system
\begin{gather}
\dot{x}(t)=Ax(t)+B\omega(t),\:t\geq0,\label{eq:LTI_syst}\\
y(t)=Cx(t)+D\omega(t),\nonumber 
\end{gather}
with $x(t)\in\R^{n}$, $y(t)\in\R^{p}$ and a Metzler matrix $A\in\R^{n\times n}$, is elementwise nonnegative for all $t\ge0$ provided that $x(0)\ge0$, $\omega:\R_{+}\to\R_{+}^{q}$ and $B\in\R_{+}^{n\times q}$ \cite{FarinaRinaldi2000,Smith95}. The output solution $y(t)$ is nonnegative if $C\in\R_{+}^{p\times n}$ and $D\in\R_{+}^{p\times q}$. Such dynamical systems are called cooperative (monotone) or nonnegative if only initial conditions in $\R_{+}^{n}$ are considered \cite{FarinaRinaldi2000,Smith95}.
\begin{lem}
\label{lem:l2}\textup{\cite{REZ11}} Given the matrices $A\in\R^{n\times n}$, $Y\in\R^{n\times n}$ and \textup{$C\in\R^{p\times n}$. }If there is a matrix \textup{$L\in\R^{n\times p}$} such that the matrices $A-LC$ and $Y$ have the same eigenvalues, then there is a matrix $S\in\R^{n\times n}$ such that $Y=S(A-LC)S^{-1}$ provided that the pairs $(A-LC,\chi_{1})$ and $(Y,\chi_{2})$ are observable for some $\chi_{1}\in\R^{1\times n}$, $\chi_{2}\in\R^{1\times n}$.\textup{ }
\end{lem}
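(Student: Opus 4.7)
The plan is to recognise the lemma as the classical statement that two non-derogatory (cyclic) matrices with the same characteristic polynomial are similar. I would proceed in three steps centred on the rational (Frobenius) canonical form.

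First, I would translate the observability hypotheses into an algebraic fact about each matrix. Setting $M := A - LC$, observability of $(M,\chi_1)$ means that the rows $\chi_1,\chi_1 M,\dots,\chi_1 M^{n-1}$ span $\R^{1\times n}$. If the minimal polynomial of $M$ had degree strictly less than $n$, it would produce a nontrivial linear relation among these rows, contradicting the rank condition; hence the minimal polynomial of $M$ coincides with its characteristic polynomial, i.e.\ $M$ is non-derogatory. The same argument applied to $(Y,\chi_2)$ shows that $Y$ is non-derogatory too.

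Second, I would invoke the fact that any non-derogatory $n\times n$ matrix is similar to the companion matrix of its characteristic polynomial. Concretely, take the matrix $T_1\in\R^{n\times n}$ whose $i$-th row is $\chi_1 M^{\,i-1}$; it is invertible by the previous step, and a direct computation shows that $T_1 M T_1^{-1}$ is the observable canonical form attached to $M$ (equivalently, the transpose of the companion matrix built from the characteristic polynomial $p(\lambda)$ of $M$). In the same way, one builds $T_2$ with $T_2 Y T_2^{-1}$ equal to the corresponding canonical form for $Y$, with characteristic polynomial $q(\lambda)$.

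Finally, since $M$ and $Y$ share the same spectrum (with multiplicities), $p=q$, the two canonical forms coincide, and
\[
T_1 M T_1^{-1}=T_2 Y T_2^{-1},
\]
which yields $Y = S M S^{-1}$ with $S := T_2^{-1}T_1$. The only delicate step is the first one, which ties the observability of $(M,\chi)$ with a single row $\chi$ to the non-derogatory property of $M$; once this is in place, the rest is bookkeeping around the rational canonical form. A minor interpretive point worth flagging is that ``same eigenvalues'' must be read with multiplicities so that the characteristic polynomials truly agree, as otherwise the similarity claim would fail.
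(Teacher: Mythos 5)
The paper does not prove this lemma; it is imported verbatim from the cited reference [REZ11], so there is no in-paper argument to compare against. Your proof is correct and is the standard one for this result: single-output observability forces each matrix to be non-derogatory, the observability matrices $T_1,T_2$ conjugate $A-LC$ and $Y$ to the observable canonical (companion) form determined solely by the characteristic polynomial, and equality of spectra (read with multiplicities, as the paper's vector notation $\lambda(\cdot)$ intends) makes those forms coincide, giving $S=T_2^{-1}T_1$. This is essentially the argument used in the cited source, so nothing further is needed.
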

This result allows to represent the system \eqref{eq:LTI_syst} in its nonnegative form via a similarity transformation of coordinates.
\begin{lem}
\label{lem:l3}\textup{\cite{Efimov_a2013}} Let $D\in\Xi\subset\R^{n\times n}$ be a matrix variable satisfying the interval constraints $\Xi=\{D\in\R^{n\times n}:\,D_{a}-\Delta\le D\le D_{a}+\Delta\}$ for some $D_{a}^{\text{T}}=D_{a}\in\R^{n\times n}$ and $\Delta\in\R_{+}^{n\times n}$. If for some constant $\mu\in\R_{+}$ and a diagonal matrix $\Upsilon\in\R^{n\times n}$ the Metzler matrix $Y=\mu E_{n\times n}-\Upsilon$ has the same eigenvalues as the matrix $D_{a}$, then there is an orthogonal matrix $S\in\R^{n\times n}$ such that the matrices $S^{\text{T}}DS$ are Metzler for all $D\in\Xi$ provided that $\mu>n||\Delta||_{max}$.\textup{ }
\end{lem}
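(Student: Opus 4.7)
The plan is to exploit the symmetry of both $D_{a}$ and $Y$, and then control the off-diagonal perturbation caused by a generic $D \in \Xi$ via the elementwise/induced-$L_{2}$ norm inequality recalled earlier in the paper.

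First I would observe that $D_{a}$ is symmetric by assumption and that $Y = \mu E_{n\times n} - \Upsilon$ is also symmetric, since $E_{n\times n}$ is the all-ones matrix and $\Upsilon$ is diagonal. Two real symmetric matrices share a spectrum if and only if they are orthogonally similar, so there exists an orthogonal matrix $S \in \R^{n\times n}$ with
\begin{equation*}
Y = S\tr D_{a} S.
\end{equation*}
This $S$ is the candidate transformation in the lemma.

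Next, for an arbitrary $D \in \Xi$ I would write $D = D_{a} + \Delta_{D}$ with $\Delta_{D} \in \R^{n\times n}$ satisfying $|\Delta_{D}| \le \Delta$ elementwise, so that
\begin{equation*}
S\tr D S \;=\; Y + S\tr \Delta_{D} S \;=\; \mu E_{n\times n} - \Upsilon + S\tr \Delta_{D} S.
\end{equation*}
All off-diagonal entries of $Y$ equal $\mu$, while $\Upsilon$ only modifies the diagonal. Therefore $S\tr D S$ is Metzler as soon as every off-diagonal entry of $S\tr \Delta_{D} S$ exceeds $-\mu$, which will be guaranteed if $\|S\tr \Delta_{D} S\|_{\max} < \mu$.

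Finally, I would invoke the norm chain stated in the preliminaries: $\|S\tr \Delta_{D} S\|_{\max} \le \|S\tr \Delta_{D} S\|_{2}$, and then use orthogonal invariance of the induced $L_{2}$ norm together with $\|\cdot\|_{2} \le n\|\cdot\|_{\max}$ to obtain
\begin{equation*}
\|S\tr \Delta_{D} S\|_{\max} \le \|S\tr \Delta_{D} S\|_{2} = \|\Delta_{D}\|_{2} \le n\|\Delta_{D}\|_{\max} \le n\|\Delta\|_{\max} < \mu,
\end{equation*}
where the last inequality is the hypothesis $\mu > n\|\Delta\|_{\max}$. This forces every off-diagonal entry of $S\tr D S$ to be strictly positive, hence the matrix is Metzler, and because the bound is uniform over $\Delta_{D}$ the conclusion holds for every $D \in \Xi$. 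The only delicate step is the spectral matching that yields the orthogonal $S$; once that is in place, the rest is a direct norm estimate relying on the max/$L_{2}$ comparison and on $S$ being orthogonal.
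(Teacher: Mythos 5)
Your proof is correct. The paper itself gives no proof of this lemma---it is quoted from \cite{Efimov_a2013}---but your argument (orthogonal similarity $Y=S^{\top}D_{a}S$ of the two symmetric matrices sharing a spectrum, followed by the estimate $||S^{\top}\Delta_{D}S||_{max}\le||S^{\top}\Delta_{D}S||_{2}=||\Delta_{D}||_{2}\le n||\Delta_{D}||_{max}\le n||\Delta||_{max}<\mu$ on the off-diagonal perturbation of $Y=\mu E_{n\times n}-\Upsilon$) is exactly the standard one used in that reference, so there is nothing to add.
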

In the last lemma, the existence of similarity transformation is proven for an interval of matrices, \emph{e.g}. LPV dynamics.

\section{\label{sec:Problem-statement} Problem statement}

Consider an LPV system:
\begin{equation}
\dot{x}(t)=A(\theta(t))x(t)+Bd(t),\;t\geq0,\label{eq:LPV_syst}
\end{equation}
where $x(t)\in\R^{n}$ is the state, $\theta(t)\in\Pi\subset\R^{r}$ is the vector of scheduling parameters with a known set of admissible values $\Pi$, $\theta\in\L_{\infty}^{r}$; the signal $d:\R_{+}\to\R^{m}$ is the external input. The values of the scheduling vector $\theta(t)$ are not available for measurements, and only the set of admissible values $\Pi$ is known. The matrix $B\in\R^{n\times m}$ is known, the matrix function $A:\Pi\to\R^{n\times n}$ is locally bounded (continuous) and known.

The following assumptions will be used in this work.
\begin{assumption}
\label{ass:a1} In the system \eqref{eq:LPV_syst}, $x\in\L_{\infty}^{n}$. In addition, $x(0)\in[\underline{x}_{0},\overline{x}_{0}]$ for some known $\underline{x}_{0},\overline{x}_{0}\in\R^{n}$.
\end{assumption}

\begin{assumption}
\label{ass:a2} There exists known signals $\underline{d},\overline{d}\in\L_{\infty}^{n}$ such that $\underline{d}(t)\leq d(t)\leq\overline{d}(t)$ for all $t\geq0$.
\end{assumption}
Assumption \ref{ass:a1} means that the system \eqref{eq:LPV_syst} generates stable trajectories with a bounded state $x$ for the applied class of inputs $d$, and the initial conditions $x(0)$ are constrained to belong to a given interval $[\underline{x}_{0},\overline{x}_{0}]$. In Assumption \ref{ass:a2}, it is supposed that the input $d(t)$ belongs to a known bounded interval $[\underline{d}(t),\overline{d}(t)]$ for all $t\in\R_{+}$, which is the standard hypothesis for the interval estimation \cite{Efimov2016,Raiessi2018}.

Note that since the function $A$ and the set $\Pi$ are known, and $\theta\in\Pi$, then there exist matrices $\underline{A},\overline{A}\in\R^{n\times n}$, which can be easily computed, such that 
\[
\underline{A}\leq A(\theta)\leq\overline{A},\quad\forall\theta\in\Pi.
\]

\subsection{The goal}

The objective of this work is to design an \emph{interval predictor} for the system \eqref{eq:LPV_syst}, which takes the information on the initial conditions $[\underline{x}_{0},\overline{x}_{0}]$, the admissible bounds on the values of the exogenous input $[\underline{d}(t),\overline{d}(t)]$, the information about $A$ and $\Pi$ (\emph{e.g}. the matrices $\underline{A},\overline{A}$, but not the instant value of $\theta(t)$) and generates bounded interval estimates $\underline{x}(t),\overline{x}(t)\in\R^{n}$ such that
\begin{equation}
\underline{x}(t)\leq x(t)\leq\overline{x}(t),\quad\forall t\geq0.\label{eq:Interval_Goal}
\end{equation}

\subsection{A motivation example}

Following the result of Lemma \ref{lem:interval}, there is a straightforward solution to the problem used in \cite{Leurent2018}:
\begin{eqnarray}
\dot{\underline{x}}(t) & = & \underline{A}^{+}\underline{x}^{+}(t)-\overline{A}^{+}\underline{x}^{-}(t)-\underline{A}^{-}\overline{x}^{+}(t)\nonumber \\
 &  & +\overline{A}^{-}\overline{x}^{-}(t)+B^{+}\underline{d}(t)-B^{-}\overline{d}(t),\label{eq:IP_direct}\\
\dot{\overline{x}}(t) & = & \overline{A}^{+}\overline{x}^{+}(t)-\underline{A}^{+}\overline{x}^{-}(t)-\overline{A}^{-}\underline{x}^{+}(t)\nonumber \\
 &  & +\underline{A}^{-}\underline{x}^{-}(t)+B^{+}\overline{d}(t)-B^{-}\underline{d}(t),\nonumber \\
 &  & \underline{x}(0)=\underline{x}_{0},\;\overline{x}(0)=\overline{x}_{0},\nonumber 
\end{eqnarray}
then it is obvious to verify that the relations \eqref{eq:Interval_Goal} are satisfied, but the stability analysis of the system \eqref{eq:IP_direct} is more tricky. Indeed, \eqref{eq:IP_direct} is a purely nonlinear system (since $\underline{x}^{+}$, $\underline{x}^{-}$, $\overline{x}^{+}$ and $\overline{x}^{-}$ are globally Lipschitz functions of the state $\underline{x}$ and $\overline{x}$), whose robust stability with respect to the bounded external inputs $\underline{d}$ and $\overline{d}$ can be assessed if a suitable Lyapunov function is found. And it is easy to find an example, where the matrices $\underline{A}$ and $\overline{A}$ are stable, but the system \eqref{eq:IP_direct} is not:
\begin{example*}
[motivating] Consider a scalar system:
\[
\dot{x}(t)=-\theta(t)x(t)+d(t),\;t\geq0,
\]
where $x(t)\in\R$ with $x(0)\in[\underline{x}_{0},\overline{x}_{0}]=[1.0, 1.1]$, $\theta(t)\in\Pi=[\underline{\theta},\overline{\theta}]=[0.5,1.5]$ and $d(t)\in[\underline{d},\overline{d}]=[-0.1,0.1]$ for all $t\geq0$. Obviously, assumptions \ref{ass:a1} and \ref{ass:a2} are satisfied, and this uncertain dynamics produces bounded trajectories (to prove this consider a Lyapunov function $V(x)=x^{2}$). Then the interval predictor \eqref{eq:IP_direct} takes the form:
\begin{eqnarray*}
\dot{\underline{x}}(t) & = & -\overline{\theta}\overline{x}^{+}(t)+\underline{\theta}\overline{x}^{-}(t)+\underline{d},\\
\dot{\overline{x}}(t) & = & -\underline{\theta}\underline{x}^{+}(t)+\overline{\theta}\underline{x}^{-}(t)+\overline{d}.
\end{eqnarray*}
The results of simulation are shown in Fig. \ref{fig:IP_Direct}. As we can conclude, additional consideration and design are needed to properly solve the posed problem.
\begin{figure}
\begin{centering}
\includegraphics[width=\linewidth]{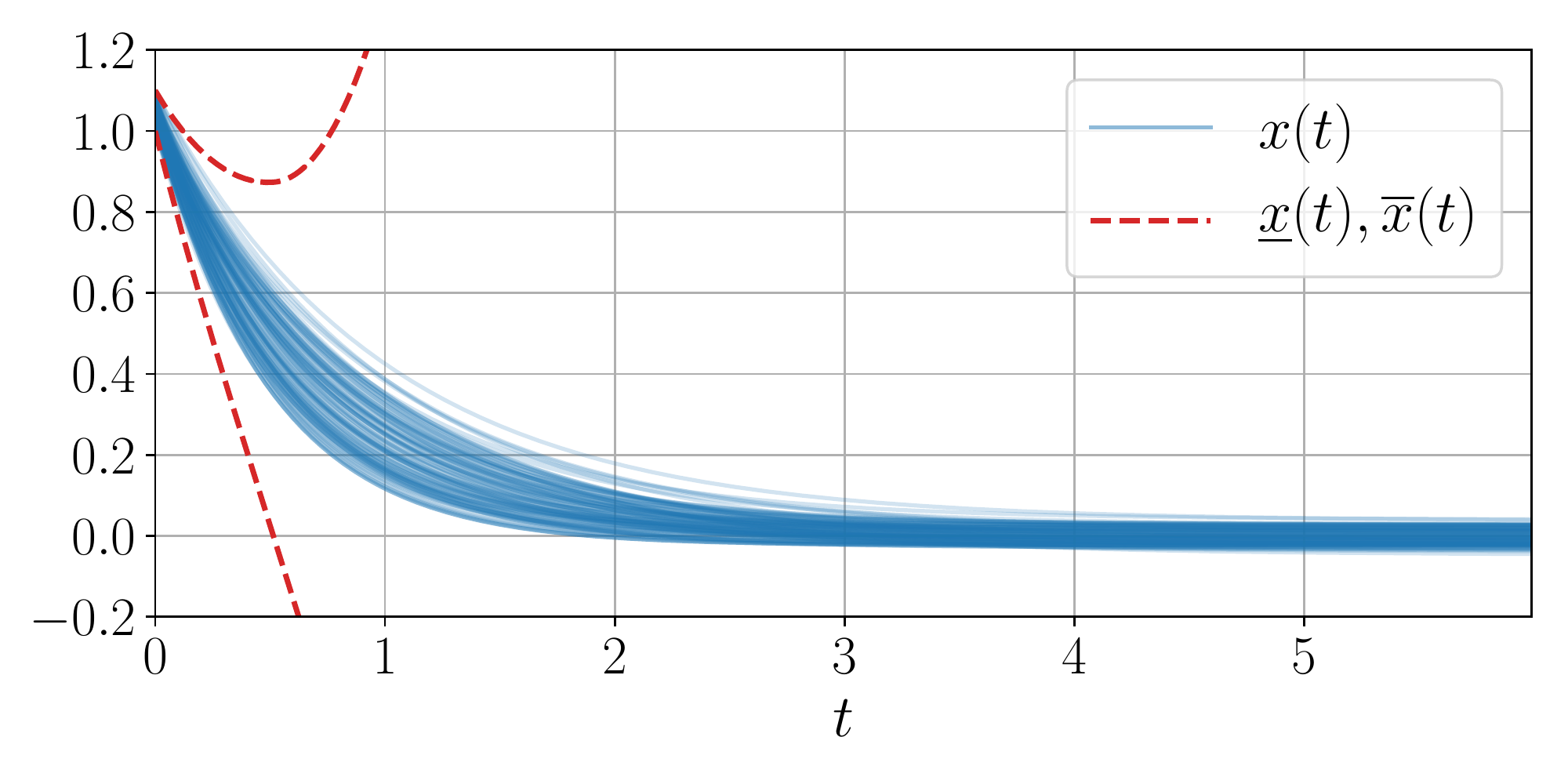}
\par\end{centering}
\caption{\label{fig:IP_Direct} The results of prediction by \eqref{eq:IP_direct}: even in such a simplistic setting, the predictor is unstable and diverges quickly.}
\end{figure}
\end{example*}

\section{\label{sec:Main-results} Interval predictor design}

Note that, in related papers \cite{AitRami2008,RVZ10,Bolajraf2011,Efimov_a2013,Efimov_tac2013,Chebotarev2015}, various interval observers for LPV systems have been proposed, but in those works the cooperativity and stability of the estimation error dynamics are ensured by a proper selection of observer gains and/or by design of control algorithms, which can be dependent on $\underline{x}$, $\overline{x}$ and guarantee the observer robust stability. For interval predictor there is no such a freedom, then a careful selection of hypotheses has to be made in order to provide a desired solution.
We will additionally assume the following:
\begin{assumption}
\label{ass:a3} There exist a Metzler matrix $A_{0}\in\R^{n\times n}$ and matrices $\Delta A_{i}\in\R^{n\times n}$, $i=\overline{1,N}$ for some $N\in\N_{+}$ such that the following relations are satisfied for all $\theta\in\Pi$:
\begin{gather*}
A(\theta)=A_{0}+\sum_{i=1}^{N}\lambda_{i}(\theta)\Delta A_{i},\\
\sum_{i=1}^{N}\lambda_{i}(\theta)=1;\;\lambda_{i}(\theta)\in[0,1],\;i=\overline{1,N}.
\end{gather*}
\end{assumption}
Therefore, it is assumed that the matrix $A(\theta)$ for any $\theta\in\Pi$ can be embedded in a polytope defined by $N$ known vertices $\Delta A_{i}$ with the given center $A_{0}$, which admits already useful properties. According to the results of lemmas \ref{lem:l2} and \ref{lem:l3}, the fulfillment of Assumption \ref{ass:a3} can be imposed by applying a properly designed similarity transformation, which maps a matrix (interval of matrices) to a Metzler one. Design of such a transformation is not considered in this work, and we will just suppose in Assumption \ref{ass:a3} that the system \eqref{eq:LPV_syst} has been already put in the right form:
\[
\dot{x}(t)=[A_{0}+\sum_{i=1}^{N}\lambda_{i}(\theta(t))\Delta A_{i}]x(t)+Bd(t).
\]
Denote
\[
\Delta A_{+}=\sum_{i=1}^{N}\Delta A_{i}^{+},\;\Delta A_{-}=\sum_{i=1}^{N}\Delta A_{i}^{-},
\]
then the following interval predictor can be designed:
\begin{thm}
\label{thm:main} Let assumptions \ref{ass:a1}\textendash \ref{ass:a3} be satisfied for the system \eqref{eq:LPV_syst}, then an interval predictor
\begin{eqnarray}
\dot{\underline{x}}(t) & = & A_{0}\underline{x}(t)-\Delta A_{+}\underline{x}^{-}(t)-\Delta A_{-}\overline{x}^{+}(t)\nonumber \\
 &  & +B^{+}\underline{d}(t)-B^{-}\overline{d}(t),\label{eq:IP_main}\\
\dot{\overline{x}}(t) & = & A_{0}\overline{x}(t)+\Delta A_{+}\overline{x}^{+}(t)+\Delta A_{-}\underline{x}^{-}(t)\nonumber \\
 &  & +B^{+}\overline{d}(t)-B^{-}\underline{d}(t),\nonumber \\
 &  & \underline{x}(0)=\underline{x}_{0},\;\overline{x}(0)=\overline{x}_{0}\nonumber 
\end{eqnarray}
ensures the property \eqref{eq:Interval_Goal}. If there exist diagonal matrices $P$, $Q$, $Q_{+}$, $Q_{-}$, $Z_{+}$, $Z_{-}$, $\Psi_{+}$, $\Psi_{-}$, $\Psi$, $\Gamma\in\R^{2n\times2n}$ such that the following LMIs are satisfied:
\begin{gather*}
P+\min\{Z_{+},Z_{-}\}>0,\;\Upsilon\preceq0,\;\Gamma>0,\\
Q+\min\{Q_{+},Q_{-}\}+2\min\{\Psi_{+},\Psi_{-}\}>0,
\end{gather*}
where{\footnotesize{}
\begin{gather*}
\Upsilon=\left[\begin{array}{cccc}
\Upsilon_{11} & \Upsilon_{12} & \Upsilon_{13} & P\\
\Upsilon_{12}^{\top} & \Upsilon_{22} & \Upsilon_{23} & Z_{+}\\
\Upsilon_{13}^{\top} & \Upsilon_{23}^{\top} & \Upsilon_{33} & Z_{-}\\
P & Z_{+} & Z_{-} & -\Gamma
\end{array}\right],\\
\Upsilon_{11}=\mathcal{A}^{\top}P+P\mathcal{A}+Q,\;\Upsilon_{12}=\mathcal{A}^{\top}Z_{+}+PR_{+}+\Psi_{+},\\
\Upsilon_{13}=\mathcal{A}^{\top}Z_{-}+PR_{-}+\Psi_{-},\;\Upsilon_{22}=Z_{+}R_{+}+R_{+}^{\top}Z_{+}+Q_{+},\\
\Upsilon_{23}=Z_{+}R_{-}+R_{+}^{\top}Z_{-}+\Psi,\;\Upsilon_{33}=Z_{-}R_{-}+R_{-}^{\top}Z_{-}+Q_{-},\\
\mathcal{A}=\left[\begin{array}{cc}
A_{0} & 0\\
0 & A_{0}
\end{array}\right],\;R_{+}=\left[\begin{array}{cc}
0 & -\Delta A_{-}\\
0 & \Delta A_{+}
\end{array}\right],\;R_{-}=\left[\begin{array}{cc}
\Delta A_{+} & 0\\
-\Delta A_{-} & 0
\end{array}\right],
\end{gather*}
}then the predictor \eqref{eq:IP_main} is input-to-state stable with respect to the inputs $\underline{d}$, $\overline{d}$.
\end{thm}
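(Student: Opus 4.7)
The proof splits into two parts: the inclusion property \eqref{eq:Interval_Goal} and the input-to-state stability of \eqref{eq:IP_main} with respect to $(\underline d,\overline d)$. The inclusion is the easier half. Writing $A(\theta)x = A_0 x + \sum_{i=1}^{N}\lambda_i(\theta)\Delta A_i\, x$ by Assumption~\ref{ass:a3}, decomposing $x=x^+-x^-$ and exploiting $\lambda_i\in[0,1]$, a componentwise enumeration of worst-case signs yields, for every $x$ satisfying $\underline x\le x\le \overline x$,
\begin{equation*}
-\Delta A_+\underline x^- - \Delta A_-\overline x^+ \;\le\; \sum_{i=1}^N \lambda_i(\theta)\Delta A_i\, x \;\le\; \Delta A_+\overline x^+ + \Delta A_-\underline x^-.
\end{equation*}
Combined with the bracket on $Bd$ from Lemma~\ref{lem:interval}(1), the errors $\underline e=x-\underline x$ and $\overline e=\overline x-x$ then satisfy $\dot{\underline{e}}\ge A_0\underline e$ and $\dot{\overline{e}}\ge A_0\overline e$ whenever $\underline e,\overline e\ge 0$; since $A_0$ is Metzler (Assumption~\ref{ass:a3}) and $\underline e(0),\overline e(0)\ge 0$ (Assumption~\ref{ass:a1}), a Nagumo-type flow-invariance argument keeps them nonnegative for all $t\ge 0$.

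For the ISS part I would stack $X=[\underline x^\top,\overline x^\top]^\top$, turning \eqref{eq:IP_main} into $\dot X = \mathcal A X + R_+ X^+ - R_- X^- + \delta(t)$ with $\delta = [(B^+\underline d-B^-\overline d)^\top,(B^+\overline d-B^-\underline d)^\top]^\top$, and test the Lyapunov candidate
\begin{equation*}
V(X) = X^\top P X + 2 X^\top Z_+ X^+ + 2 X^\top Z_- X^- + (X^+)^\top Q_+ X^+ + (X^-)^\top Q_- X^-.
\end{equation*}
All weight matrices being diagonal, the componentwise identity $(X^+)_i(X^-)_i=0$ eliminates cross-sign terms and lets $V$ be rewritten as a quadratic form in $(X^+,X^-)$; the first positivity hypothesis $P+\min\{Z_+,Z_-\}\succ 0$ is exactly what makes this form positive definite, so that $V(X)\ge \alpha_1 |X|^2$.

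To differentiate $V$ I exploit the a.e.\ identity $\frac{d}{dt}(X^+_i)^2 = 2 X^+_i \dot X_i$, which for any diagonal $M$ yields $\frac{d}{dt}[(X^+)^\top M X^+] = 2(X^+)^\top M\dot X$ (and likewise for $X^-$). Substituting $\dot X$ and symmetrising produces a quadratic form in $\zeta=[X^\top,(X^+)^\top,(X^-)^\top,\delta^\top]^\top$ whose matrix differs from $\Upsilon$ only by cross terms of the form $(X^+)^\top \Psi_+ X^-$, $(X^-)^\top \Psi_- X^+$ and $(X^+)^\top \Psi X^-$. These are identically zero componentwise, so injecting them amounts to an $S$-procedure relaxation with the free diagonal multipliers $\Psi_+,\Psi_-,\Psi$; after this injection $\dot V = \zeta^\top \Upsilon \zeta$. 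The LMI $\Upsilon\preceq 0$ together with the last block $-\Gamma$ then yields $\dot V \le -\alpha|X|^2 + \delta^\top\Gamma\delta$, where the second positivity hypothesis $Q+\min\{Q_+,Q_-\}+2\min\{\Psi_+,\Psi_-\}\succ 0$ is precisely what guarantees that, after reducing $(X^+,X^-)$ back to $X=X^+-X^-$, the residual negative part dominates a genuine $|X|^2$ rather than merely $|X^+|^2+|X^-|^2$. Since $|\delta|$ is linearly bounded by $\max(|\underline d|,|\overline d|)$, input-to-state stability with respect to $(\underline d,\overline d)$ follows. I expect the main technical obstacle to be exactly this algebraic matching: checking that every term produced by $\dot V$---notably the $(X^+)^\top Z_+ R_- X^-$ interactions that populate $\Upsilon_{23}$---lands in the correct block of $\Upsilon$ once the $\Psi$-multipliers are inserted, and that the two positivity hypotheses genuinely play their two distinct roles of lower-bounding $V$ and of dominating $\dot V$ after the reduction.
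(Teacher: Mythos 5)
Your first half (the inclusion \eqref{eq:Interval_Goal}) is correct and is essentially the paper's own argument: the same bracket on $\sum_{i}\lambda_{i}\Delta A_{i}x$, the same errors $\underline{e}=x-\underline{x}$, $\overline{e}=\overline{x}-x$, and nonnegativity of the error dynamics from the Metzler property of $A_{0}$.

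The ISS half, however, has a genuine gap: your Lyapunov candidate does not match the LMIs of the theorem. The paper uses
\[
V(X)=X^{\top}PX+X^{\top}Z_{+}X^{+}-X^{\top}Z_{-}X^{-},
\]
which, since $X^{\top}Z_{+}X^{+}=(X^{+})^{\top}Z_{+}X^{+}$ and $-X^{\top}Z_{-}X^{-}=(X^{-})^{\top}Z_{-}X^{-}$ for diagonal weights, reduces coordinatewise to $(P_{ii}+Z_{+,ii})X_{i}^{2}$ when $X_{i}\ge0$ and $(P_{ii}+Z_{-,ii})X_{i}^{2}$ when $X_{i}<0$, so positive definiteness is \emph{exactly} $P+\min\{Z_{+},Z_{-}\}>0$. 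Your candidate $V=X^{\top}PX+2X^{\top}Z_{+}X^{+}+2X^{\top}Z_{-}X^{-}+(X^{+})^{\top}Q_{+}X^{+}+(X^{-})^{\top}Q_{-}X^{-}$ instead requires $P+2Z_{+}+Q_{+}>0$ and $P-2Z_{-}+Q_{-}>0$ (note the sign flip: $X^{\top}Z_{-}X^{-}=-(X^{-})^{\top}Z_{-}X^{-}$), neither of which follows from the stated hypotheses, so your $V$ need not be positive definite. Moreover, putting $Q_{+},Q_{-}$ inside $V$ generates on differentiation terms such as $2(X^{+})^{\top}Q_{+}\mathcal{A}X$ that have no counterpart in $\Upsilon$: there, $Q$, $Q_{+}$, $Q_{-}$ enter only as additive diagonal blocks of $\Upsilon_{11}$, $\Upsilon_{22}$, $\Upsilon_{33}$, i.e.\ as decay-rate (slack) certificates, not Lyapunov weights. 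Concretely, evaluating $\Upsilon\preceq0$ on $\zeta=[X^{\top},(X^{+})^{\top},-(X^{-})^{\top},\delta^{\top}]^{\top}$ gives $\dot{V}\le-X^{\top}QX-(X^{+})^{\top}(Q_{+}+2\Psi_{+})X^{+}-(X^{-})^{\top}(Q_{-}+2\Psi_{-})X^{-}+\delta^{\top}\Gamma\delta$, and $X_{i}^{2}=(X_{i}^{+})^{2}+(X_{i}^{-})^{2}$ turns this into $-X^{\top}\Omega X+\delta^{\top}\Gamma\delta$ with $\Omega=Q+\min\{Q_{+},Q_{-}\}+2\min\{\Psi_{+},\Psi_{-}\}$; that is the true role of the second positivity hypothesis. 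Your idea of injecting the $\Psi$-multipliers via $X_{i}^{+}X_{i}^{-}=0$ is right, but the argument only closes once $Q_{+},Q_{-}$ are removed from $V$ and the signs and factors on the $Z_{\pm}$ terms are corrected so that the two positivity conditions control exactly what they are supposed to.
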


Note the requirement that the matrix $P$ has to be diagonal is not restrictive, since for a Metzler matrix $\mathcal{A}$, its stability is equivalent to existence of a diagonal solution $P$ of the Lyapunov equation $\mathcal{A}^{\top}P+P\mathcal{A}\prec0$ \cite{FarinaRinaldi2000}.

\begin{proof}
	First, let us demonstrate \eqref{eq:Interval_Goal}, to this end note that
	\[
	-\Delta A_{+}\underline{x}^{-}-\Delta A_{-}\overline{x}^{+}\leq\sum_{i=1}^{N}\lambda_{i}\Delta A_{i}x\leq\Delta A_{+}\overline{x}^{+}+\Delta A_{-}\underline{x}^{-}
	\]
	and introducing usual interval estimation errors $\underline{e}=x-\underline{x}$ and $\overline{e}=\overline{x}-x$ and calculating their dynamics we get:
	\begin{eqnarray*}
		\dot{\underline{e}}(t) & = & A_{0}\underline{e}(t)+\underline{r}_{1}(t)+\underline{r}_{2}(t),\\
		\dot{\overline{e}}(t) & = & A_{0}\overline{e}(t)+\overline{r}_{1}(t)+\overline{r}_{2}(t),
	\end{eqnarray*}
	where
	\begin{gather*}
	\underline{r}_{1}=\sum_{i=1}^{N}\lambda_{i}\Delta A_{i}x+\Delta A_{+}\underline{x}^{-}+\Delta A_{-}\overline{x}^{+},\\
	\underline{r}_{2}=Bd-B^{+}\underline{d}+B^{-}\overline{d},\\
	\overline{r}_{1}=\Delta A_{+}\overline{x}^{+}+\Delta A_{-}\underline{x}^{-}-\sum_{i=1}^{N}\lambda_{i}\Delta A_{i}x,\\
	\overline{r}_{2}=B^{+}\overline{d}-B^{-}\underline{d}-Bd.
	\end{gather*}
	Non-negativity or $\underline{r}_{2}$ and $\overline{r}_{2}$ follows from Assumption \ref{ass:a2} and Lemma \ref{lem:interval}. The signals $\underline{r}_{1}$ and $\overline{r}_{1}$ are also nonnegative provided that \eqref{eq:Interval_Goal} holds and due to the calculations above. Note that the relations \eqref{eq:Interval_Goal} are satisfied for $t=0$ by construction and Assumption \ref{ass:a1}, then it is possible to show that $\underline{e}(t)\geq0$ and $\overline{e}(t)\geq0$ for all $t\geq0$ \cite{Smith95}, which confirms the relations \eqref{eq:Interval_Goal}.
	
	Second, let us consider the stability of \eqref{eq:IP_main}, and for this purpose define the extended state vector as $X=[\underline{x}^{\top}\;\;\overline{x}^{\top}]^{\top}$, whose dynamics admit the differential equation:
	\[
	\dot{X}(t)=\mathcal{A}X(t)+R_{+}X^{+}(t)-R_{-}X^{-}(t)+\delta(t),
	\]
	where
	\begin{gather*}
	\delta(t)=\left[\begin{array}{cc}
	-B^{-} & B^{+}\\
	B^{+} & -B^{-}
	\end{array}\right]\left[\begin{array}{c}
	\overline{d}(t)\\
	\underline{d}(t)
	\end{array}\right]
	\end{gather*}
	is a bounded input vector, whose norm is proportional to $\underline{d}$,
	$\overline{d}$. Following \cite{Efimov_CDC2019}, consider a candidate Lyapunov function:
	\begin{gather*}
	V(X)=X^{\top}PX+X{}^{\top}Z_{+}X^{+}-X^{\top}Z_{-}X^{-},
	\end{gather*}
	which is positive definite provided that
	\[
	P+\min\{Z_{+},Z_{-}\}>0,
	\]
	and whose derivative for the system dynamics takes the form, if $\Upsilon\preceq0$:
	\begin{eqnarray*}
		\dot{V} & \leq & -X^{\top}\Omega X+\delta^{\top}\Gamma\delta,
	\end{eqnarray*}
	where
	\[
	\Omega=Q+\min\{Q_{+},Q_{-}\}+2\min\{\Psi_{+},\Psi_{-}\}>0
	\]
	is a diagonal matrix. The substantiated properties of $V$ and its derivative imply that \eqref{eq:IP_main} is input-to-state stable \cite{Khalil2002} with respect to the input $\delta$ (or, by its definition, with respect to $(\underline{d},\overline{d})$).
\end{proof}

\begin{rem}
The LMIs of the above theorem are not conservative, since the restriction on positive definiteness of involved matrix variables is not imposed on all of them separately, but on their combinations:
\begin{gather*}
P+\min\{Z_{+},Z_{-}\}>0,\;\Gamma>0,\\
Q+\min\{Q_{+},Q_{-}\}+2\min\{\Psi_{+},\Psi_{-}\}>0,
\end{gather*}
then some of them can be sign-indefinite or negative-definite ensuring the fulfillment of the last inequality:
$
\Upsilon\preceq0.
$
\end{rem}

\begin{example*}
[motivating, continue] Let us apply the predictor \eqref{eq:IP_main}
to the motivation example:
\begin{eqnarray*}
\dot{\underline{x}}(t) & = & -\overline{\theta}\underline{x}(t)-(\overline{\theta}-\underline{\theta})\underline{x}^{-}(t)+\underline{d},\\
\dot{\overline{x}}(t) & = & -\overline{\theta}\overline{x}(t)+(\overline{\theta}-\underline{\theta})\overline{x}^{+}(t)+\overline{d},
\end{eqnarray*}
where $A_{0}=-\overline{\theta}$ is chosen, then $\Delta A_{+}=\overline{\theta}-\underline{\theta}$, $\Delta A_{-}=0$ and all conditions of Theorem \ref{thm:main} are verified. The results of simulation are shown in Fig.~\ref{fig:IP_New}. As we can see the new predictor generates very reasonable and bounded estimates. 
\begin{figure}
\begin{centering}
\includegraphics[width=\linewidth]{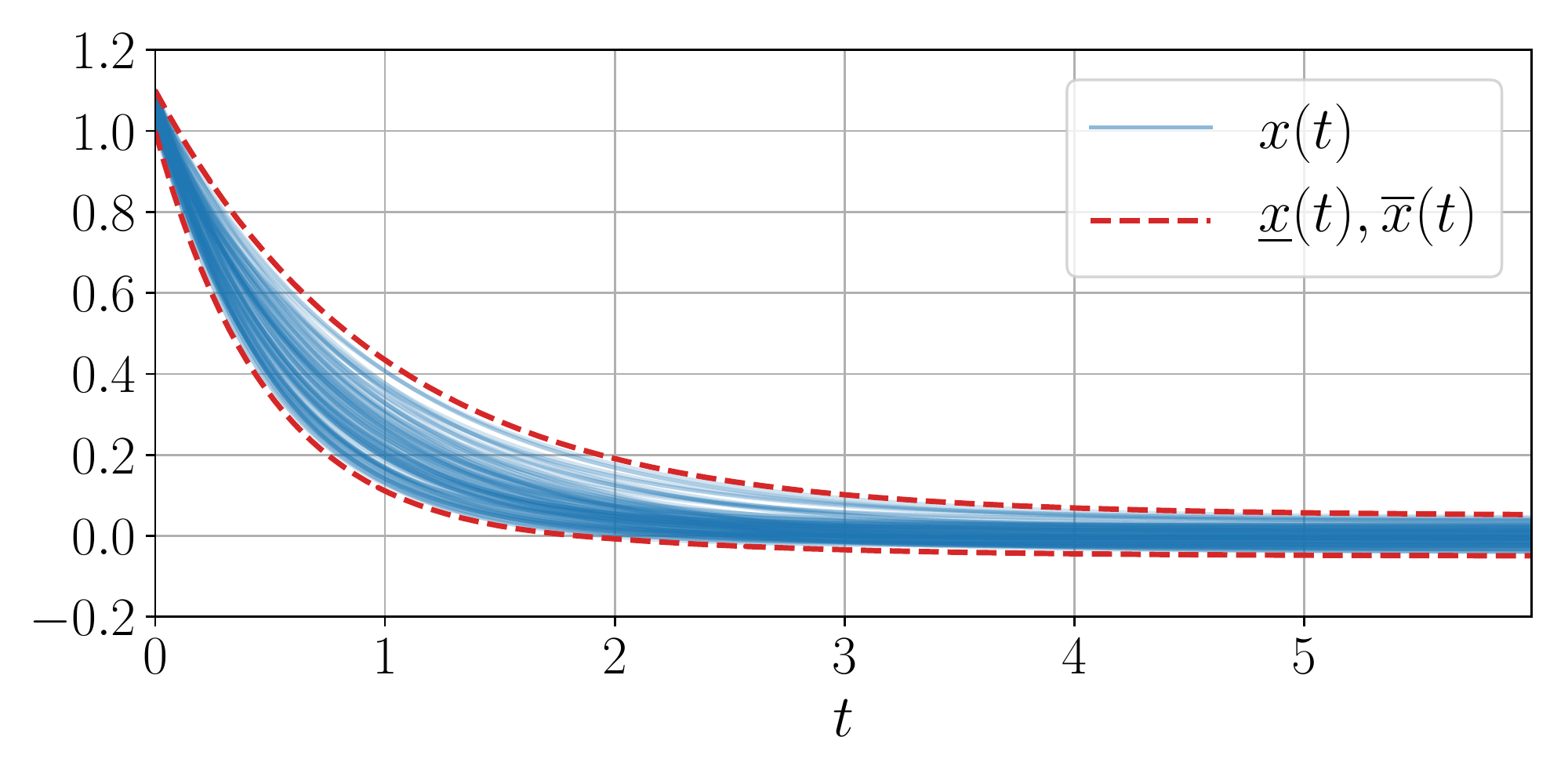}
\par\end{centering}
\caption{\label{fig:IP_New} The results of prediction by \eqref{eq:IP_main}: the new predictor is stable and produces tight bounds.}
\end{figure}
\end{example*}

\section{\label{sec:Examples} Prediction for a self-driving vehicle}

We consider the problem of safe decision-making for autonomous highway driving \cite{highway-env}. The videos and source code of all experiments are available\footnote{\href{https://eleurent.github.io/interval-prediction/}{https://eleurent.github.io/interval-prediction/}}.

An autonomous vehicle is driving on a highway populated with $N$ other agents, and uses Model Predictive Control to plan a sequence of decisions. To that end, it relies on parametrized dynamical model for each agent to predict the future trajectory of each traffic participant: $$\dot{z}_i=f_i(Z,\theta_i),\;i=\overline{1,N},$$ where $f_i$ are described below, $z_i\in\R^4$ is the state of an agent, $Z = [z_1,\dots,z_N]^\top\in\R^{4N}$, and $\theta_i\in\R^5$ is the corresponding vector of unknown parameters. Crucially, this system describes the couplings and interactions between vehicles, so that the autonomous agent can properly anticipate their reactions. 
However, we assume that we do not have access to the true values of the behavioural parameters $\theta=[\theta_1,\dots,\theta_N]^\top$; instead, we merely know that these parameters lie in a set of admissible values $\Pi\subset\R^{5N}$. In order to act safely in the face of uncertainty, we follow the framework of robust decision-making: the agent must consider all the possible trajectories in the space of $Z$ that each vehicle can follow in order to take its decisions. In this work, we focus on how to compute these trajectory enclosures through interval prediction. In the following, we describe the system and its associated interval predictor.

\subsection{Kinematics}

The kinematics of any vehicle $i\in\overline{1,N}$ are represented by the Kinematic Bicycle Model \cite{Polack2017}:
\begin{empheq}[left = \empheqlbrace]{align}
	\dot{x}_i &= v_i\cos(\psi_i), &\dot{y}_i &= v_i\sin(\psi_i), \nonumber\\
	\dot{v}_i &= a_i, &\dot{\psi}_i &= \frac{v_i}{l}tan(\beta_i) \nonumber
\end{empheq}
where $(x_i, y_i)$ is the vehicle position, $v_i$ its forward velocity and $\psi_i$ its heading, $l$ its half-length, $a_i$ is the acceleration command and $\beta_i$ is the slip angle at the centre of gravity, used as a steering command, then $z_i=[x_i,y_i,v_i,\psi_i]^\top$.

\subsection{Control}
Longitudinal behaviour is modelled by a linear controller using three features inspired from the intelligent driver model (IDM) \cite{Treiber2000}: a desired velocity, a braking term to drive slower than the front vehicle, and a braking term to respect a safe distance to the front vehicle.
Denoting $f_i$ the index of the front vehicle preceding vehicle $i$, the acceleration command can be presented as follows:
\begin{equation*}
	a_i = \begin{bmatrix}
	\theta_{i,1} & \theta_{i,2} & \theta_{i,3}
	\end{bmatrix} \begin{bmatrix}
		v_0 - v_i \\
		-(v_{f_i}-v_i)^- \\
		-(x_{f_i} - x_i - (d_0 + v_iT))^- \\
	\end{bmatrix},
	\label{eq:theta_a}
\end{equation*}
where $v_0, d_0$ and $T$ respectively denote the speed limit, jam distance and time gap given by traffic rules.

The lane $L_i$ with the lateral position $y_{L_i}$ and heading $\psi_{L_i}$ is tracked by a cascade controller of lateral position and heading $\beta_i$, which is selected in a way the closed-loop dynamics take the form:

\begin{empheq}[left = \empheqlbrace]{align}
	\label{eq:heading-command}
    \dot{\psi}_i &= \theta_{i,5}\left(\psi_{L_i}+\sin^{-1}\left(\frac{\tilde{v}_{i,y}}{v_i}\right)-\psi_i\right),\\
    \tilde{v}_{i,y} &= \theta_{i,4} (y_{L_i}-y_i). \nonumber
\end{empheq}
We assume that the drivers choose their steering command $\beta_i$ such that \eqref{eq:heading-command} is always achieved: $\beta_i = \tan^{-1}(\frac{l}{v_i}\dot{\psi}_i)$.

\subsection{LPV formulation}

The system presented so far is non-linear and must be cast into the LPV form. We approximate the non-linearities induced by the trigonometric operators through equilibrium linearisation around $y_i=y_{L_i}$ and $\psi_i=\psi_{L_i}$.

This yields the longitudinal dynamics: $\dot{x}_i = v_i$ and
\begin{align*}
\dot v_i &= \theta_{i,1} (v_0 - v_i) + \theta_{i,2} (v_{f_i} - v_i) + \theta_{i,3}(x_{f_i} - x_i - d_0 - v_i T),
\end{align*}
where $\theta_{i,2}$ and $\theta_{i,3}$ are set to $0$ whenever the corresponding features are not active.
It can be rewritten in the form $$\dot{Z} = A(\theta)(Z-Z_c) + d.$$ For example, in the case of two vehicles only:
\begin{equation*}
    Z = \begin{bmatrix}
x_i \\
x_{f_i} \\
v_i \\
v_{f_i} \\
\end{bmatrix}
,\quad
Z_c = \begin{bmatrix}
-d_0-v_0 T \\
0 \\
v_0\\
v_0 \\
\end{bmatrix}
,\quad
d = \begin{bmatrix}
v_0 \\
v_0 \\
0\\
0\\
\end{bmatrix}
\end{equation*}

\begin{equation*}
A(\theta)
=
\begin{blockarray}{ccccc}
 & i & f_i & i & f_i \\
\begin{block}{c[cccc]}
i & 0 & 0 & 1 & 0 \\
f_i & 0 & 0 & 0 & 1 \\
i & -\theta_{i,3} & \theta_{i,3} & -\theta_{i,1}-\theta_{i,2}-\theta_{i,3} & \theta_{i,2} \\
f_i & 0 & 0 & 0 & -\theta_{f_i,1} \\
\end{block}
\end{blockarray}
\end{equation*}

The lateral dynamics are in a similar form:
\begin{equation*}
\begin{bmatrix}
\dot{y}_i \\
\dot{\psi}_i \\
\end{bmatrix}
=
\begin{bmatrix}
0 & v_i \\
-\frac{\theta_{i,4} \theta_{i,5}}{v_i} & -\theta_{i,5}
\end{bmatrix}
\begin{bmatrix}
y_i - y_{L_i} \\
\psi_i - \psi_{L_i}
\end{bmatrix}
+
\begin{bmatrix}
v_i\psi_{L_i} \\
0
\end{bmatrix}
\end{equation*}
Here, the dependency in $v_i$ is seen as an uncertain parametric dependency, \emph{i.e.} $\theta_{i,6}=v_i$, with constant bounds assumed for $v_i$ using an overset of the longitudinal interval predictor.

\subsection{Change of coordinates}
In both cases, the obtained polytope centre $A_0$ is non-Metzler.
We use lemmas \ref{lem:l2} and \ref{lem:l3} to compute a similarity transformation of coordinates. Precisely, we ensure that the polytope is chosen so that its centre $A_0$ is diagonalisable having real eigenvalues, and perform an eigendecomposition to compute its change of basis matrix $S$. The transformed system $Z'=S^{-1}(Z-Z_c)$ verifies Assumption~\ref{ass:a3} as required to apply the interval predictor of Theorem~\ref{thm:main}. Finally, the obtained predictor is transformed back to the original coordinates $Z$ by using the interval arithmetic of Lemma~\ref{lem:interval}.

\subsection{Results}

\begin{figure}
    \subfloat[The naive predictor \eqref{eq:IP_direct} quickly diverges]{
    \centering
      \includegraphics[width=0.49\textwidth]{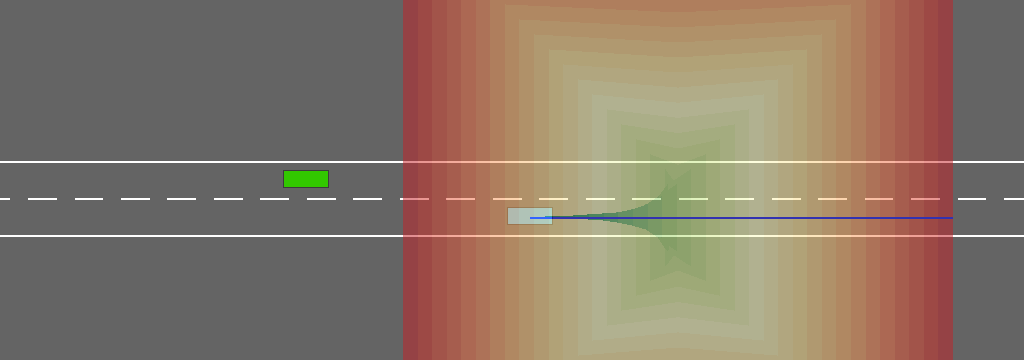}
      \label{sub:hw-a}
    }
    \newline
     \subfloat[The proposed predictor \eqref{eq:IP_main} remains stable]{
    \centering
      \includegraphics[width=0.49\textwidth]{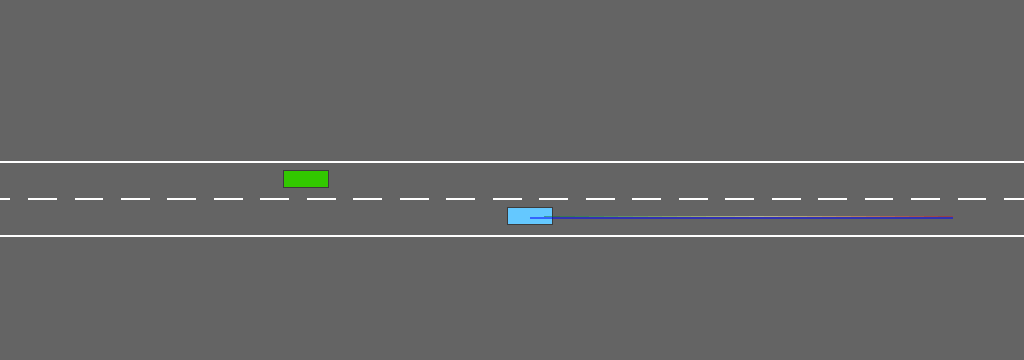}
      \label{sub:hw-b}
}
\newline
    \subfloat[Prediction during a lane change maneuver]{
    \centering
      \includegraphics[width=0.49\textwidth]{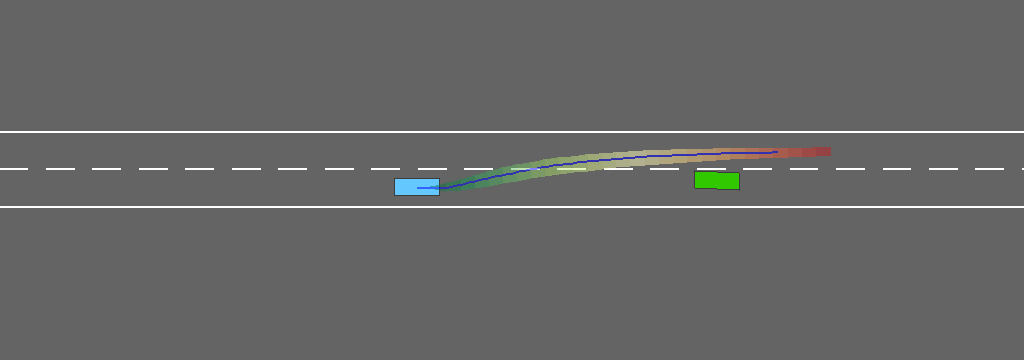}
      \label{sub:hw-c}
}
\newline
    \subfloat[Prediction with uncertainty in the followed lane $L_i$]{
    \centering
      \includegraphics[width=0.49\textwidth]{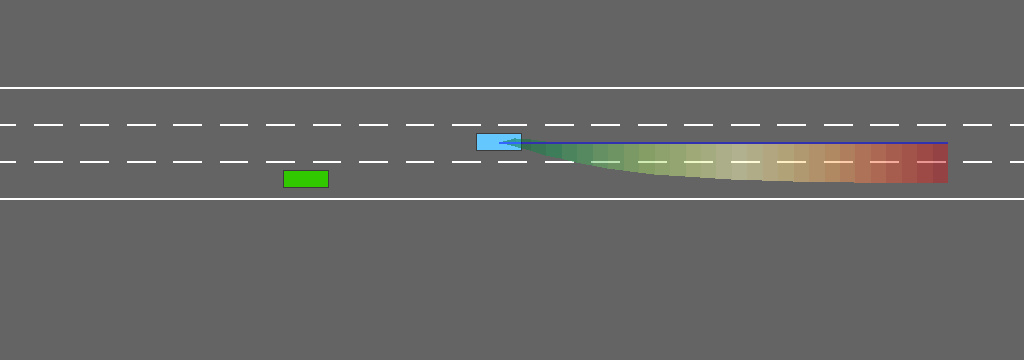}
      \label{sub:hw-d}
}
    \caption{State intervals obtained by the two methods in different conditions.}
    \label{fig:highway}
\end{figure}

We show the resulting intervals in Fig.~\ref{fig:highway}. The target vehicle with uncertain behaviour is in blue, while the ego-vehicle is in green. Its trajectory interval is computed over a duration of two seconds and represented by an area filled with a colour gradient representing time. The ground-truth trajectory is shown in blue. In Fig.~\ref{sub:hw-a}, we observe that the direct predictor \eqref{eq:IP_direct} is unstable and quickly diverges to cover the whole road, thus hindering any sensible decision-making. In \cite{Leurent2018}, they had to circumvent this issue by subdividing $\Pi$ and $[\underline{Z}, \overline{Z}]$ to reduce the initial overestimations and merely delay the divergence \cite{Adrot2003}, at the price of a heavy computational load. In stark contrast, we see in Fig.~\ref{sub:hw-b} that the novel predictor \eqref{eq:IP_main} is stable even over long horizons, which allows the ego-vehicle to plan an overtaking maneuver. Until then, there was little uncertainty in the predicted trajectory for the target vehicle was isolated, but as the ego-vehicle cuts into its lane in Fig.~\ref{sub:hw-c}, we start seeing the  effects of uncertain interactions between the two vehicles, in both longitudinal and lateral directions. Our framework is quite flexible in representing different assumptions on the admissible behaviours. For instance, we show in Fig.~\ref{sub:hw-d} a simulation in which we model a right-hand traffic where drivers are expected to keep to the rightmost lane. In such a situation, it is reasonable to assume that in the absence of any obstacle in front, a vehicle driving on the middle lane will either stay there or return to the right lane, but has no incentive to change to the left-lane. This simple assumption on $L_i$ can easily be incorporated in the interval predictor, and enables the emergence of a realistic behaviour when running the robust decision-making procedure: the ego-vehicle cannot pass another vehicle by its right side, and can only overtake it by its left side. These behaviours displaying safe reasoning under uncertainty are showcased in the attached videos.

\section{Conclusion}

The prediction problem for uncertain LPV systems is solved by designing an interval predictor, which is described by nonlinear differential equations, and whose stability is evaluated using a new Lyapunov function. The corresponding robust stability conditions are expressed in terms of LMIs. The proficiency of the method is demonstrated in application to a problem of safe motion planning for a self-driving car.

\bibliographystyle{ieeetr}
\bibliography{interval_lpv}

\end{document}